\documentclass[10pt]{article}
\usepackage[utf8]{inputenc}
\usepackage[natbibapa]{apacite}
\usepackage{xcolor}
\usepackage{graphicx}
\usepackage{epsfig}
\usepackage{multirow}
\usepackage{colortbl}
\usepackage{changepage}
\usepackage{multirow}
\usepackage{amsfonts,amssymb,graphics,epsfig,verbatim,bm,latexsym,amsmath,amsbsy,bbm,bm,amsthm}
\usepackage{todonotes}
\usepackage{pseudocode}
\usepackage{algorithm}\usepackage{float}\usepackage{lscape}
\usepackage{soul}
\usepackage{algpseudocode}\usepackage{capt-of}

\DeclareMathOperator*{\argminA}{arg\,min} 




\newtheorem{assumption}{Assumption}
\newtheorem{proposition}{Proposition}

\newtheorem{remark}{Remark}

  
\title{Optimal selection of the number of control units in kNN algorithm to estimate average treatment effects}

\author{Andr\'es Ram\'irez-Hassan\thanks{Department of Economics, Universidad EAFIT, Medell\'in, Colombia; email: aramir21@eafit.edu.co} \and Raquel Vargas-Correa\thanks{Department of Economics, Universidad EAFIT, Medell\'in, Colombia; email: rvargas@eafit.edu.co} \and Gustavo Garcia\thanks{Department of Economics, Universidad EAFIT, Medell\'in, Colombia; email: ggarci24@eafit.edu.co} \and Daniel Londoño\thanks{Department of Economics, Universidad EAFIT, Medell\'in, Colombia; email: dlondoko@eafit.edu.co}}
	
\date{\today}

\begin{document}
\maketitle \thispagestyle{empty}

\begin{abstract}
\setlength{\baselineskip}{10pt}
\vspace{0.2in} \noindent We propose a simple approach to optimally select the number of control units in $k$ nearest neighbors (kNN) algorithm focusing in minimizing the mean squared error for the average treatment effects. Our approach is non-parametric where confidence intervals for the treatment effects were calculated using asymptotic results with bias correction. Simulation exercises show that our approach gets relative small mean squared errors, and a balance between confidence intervals length and type I error. We analyzed the average treatment effects on treated (ATET) of participation in 401(k) plans on accumulated net financial assets confirming significant effects on amount and positive probability of net asset. Our optimal $k$ selection produces significant narrower ATET confidence intervals compared with common practice of using $k=1$. 


\vspace{0.2in} \noindent
{\normalsize JEL Classification: C14, C18, O18, R42. 
\newline
Keywords: Average treatment effects, k nearest neighbors, optimal number of control units, 401(k).}

\end{abstract}
\newpage
\setcounter{page}{1}
\pagenumbering{arabic}

\section{Introduction}
Matching estimators are commonly used to estimate average treatment effects when treatment assignment is independent of outcome conditional on confounding variables \citep[p.~871]{Cameron05}. However, it seems that optimal selection of the number of matching units ($k$) is an open question as most of the applied literature fixes $k$ small without any optimal criterion, often $k=1$ \citep{Abadie2016}. Choice of $k$ is a trade-off between bias and variance. Small $k$ implies small bias, but high variance, and vice versa. So, the aim of this paper is to propose a simple approach to optimally select $k$. In particular, we focus on the $k$ nearest-neighbors (kNN) algorithm when used to estimate treatment effects: average treatment effect (ATE), and average treatment effect on treated (ATET).

Optimal selection of $k$ in kNN algorithm is done by cross validation when prediction or classification are the objectives \citep[p.~241]{Hastie2009}. This is easily performed as the statistical object of interest is observable in-sample. On the other hand, average treatment effects are not observable anywhere. So, we follow similar arguments to \cite{Athey2015} proposing an unbiased function of observable variables to optimally choose $k$ through cross validation when the main inferential concern is treatment effects. 

We apply our proposal to analyze ATET due to enrollment in 401(k) on amount and probability of positive accumulated net financial assets. We found significant ATET whose point estimates are approximately \$15 K and 19\% given an optimal $k^*=19$ in both cases.

After this brief introduction, Section \ref{Optimal} shows our proposal to optimally select $k$ in the kNN algorithm. Section \ref{Simulations} displays results from some simulations exercises, and Section \ref{Application} shows the results of our application regarding 401(k) enrollment. Section \ref{Conclusions} ends with some concluding remarks.

\section{Optimal selection of $k$ in kNN}\label{Optimal}

We will focus on the average treatment effect on treated as we will use this in our application. However, results associated with the average treatment effects are in the Appendix (see subsection \ref{ref:ATE}). We build the matching units using $k$ nearest neighbors (kNN) with replacement, where $k$ is selected minimizing the mean squared error between a conditional unbiased estimator of the average treatment effect on the treated (ATET) and a \textit{matching estimator} of the ATET.

We consider a framework where there is a random binary treatment ($D_i=\left\{0,1\right\}$) that is independent of the outcome variable ($Y_i$) conditional on independent observable variables ($\bm{X}_i\subset \mathbb{R}^P$), $i=1,2,\dots,N$. The average treatment effect on treated,

\begin{equation}\label{eq1}
\Delta^{T}=\mathbb{E}\left[(Y_i(1)-Y_i(0))|D_i=1, \bm{X}_i=\bm{x}_i\right],   
\end{equation}

\noindent where $\bm{X}_i$ are control variables, $\left\{Y_i(0),Y_i(1)\right\}$ are the potential outcomes under different states, untreated and treated,
\begin{align*}
    Y_i=Y_i(D_i)=
    \begin{Bmatrix}
    Y_i(0), & D_i=0\\
    Y_i(1), & D_i=1
    \end{Bmatrix}.
\end{align*}

However, we do not simultaneously observe the same individual under both states. So, we propose to use the matching approach to build the synthetic controls, the general formula is \citep[p.~875]{Cameron05},

\begin{equation}\label{eq2A}
    \hat{\Delta}^{TM}=\frac{1}{N_1}\sum_{i\in\left\{D_i=1\right\}}\left[Y_i(1)-\frac{1}{k}\sum_{j\in A_k^i(\bm{x})}Y_j(0)\right],
\end{equation}

where $A_k^i(\bm{x})=\left\{j:||\bm{X}_i-\bm{X}_j||<||\bm{X}_i-\bm{X}_l||, j\in\left\{D_j=0\right\}, l\in\left\{D_l=0\right\}\right\}$ is the set of the $k$ closest untreated units to treated unit $i$ in terms of the Euclidean norm in covariates, $j,l=1,2,\dots,N_0$, $||\cdot ||$ is the Euclidean norm, $N_0$ is the sample size of untreated individuals ($D_i=0$), $N_1$ is the sample size of treated individuals ($D_i=1$), and $k$ is the cardinality of $A_k^i(\bm{x})$, that is, the number of controls for treated unit,
\begin{equation*}
    |A_k^i(\bm{x})|=\sum_{j\in A_k^i(\bm{x})}\mathbbm{1}\left\{||\bm{X}_i-\bm{X}_j||<||\bm{X}_i-\bm{X}_l||\right\}=k.    
\end{equation*}

Therefore, the synthetic control for treated individual $i$ is built using an average of untreated nearest individuals (neighbors).

The choice of $k$ is a trade-off between bias and variance. Selecting just one neighbor minimizes bias, this is equal to zero if there is exact match ($k=1, \bm{x_i}=\bm{x_j}$), but implies high variability. On the other hand, a large amount of neighbors decreases variance but increases bias. In general, $\bm{x_i}=\bm{x_j}$ is an event of probability 0 for continuous covariates, then \cite{Abadie2006} show that matching estimators have an asymptotic bias. This asymptotic bias can be ignored if $N_1^{P_c/2}=O(N_0)$ where $P_c$ is the number of continuous covariates ($P_c/2>1$). The bias of the ATET is 

\begin{equation}\label{eq3A} 
    B^{TM}=\frac{1}{N_1}\sum_{i\in\left\{D_i=1\right\}}\left[\mu_0(\bm{x}_i)-\frac{1}{k}\sum_{j\in A_k^i(\bm x)}\mu_0(\bm{x}_j)\right],
\end{equation}

\noindent where $\mu_0(\bm{x}_i)=\mathbb{E}\left[Y|\bm{X}=\bm{x},D=0\right]$. This can be estimated non-parametrically using a series expansion estimator to obtain $\hat{B}^{TM}$ \citep{Abadie2011}.

Therefore by equations \ref{eq2A} and \ref{eq3A}, the bias-corrected matching estimator is

\begin{equation}\label{eqBC_ATET}
    \hat{\Delta}^{BC}=\hat{\Delta}^{TM}-\hat{B}^{TM}.
\end{equation}

\cite{Abadie2006} show under suitable assumptions (see Assumptions 1, 2', 3' and 4 in their paper) that 

\begin{equation*}
    (V^{E,T}+V^{\Delta(\bm{X}),T})^{-1/2}\sqrt{N_1}(\hat{\Delta}^{TM}-B^{TM}-\Delta^T)\xrightarrow[]{d}N(0,1),
\end{equation*}
\noindent where $V^{E,T}=\frac{1}{N_1}\sum_{i=1}^N \left(D_i+(1-D_i)\frac{J_k^i}{k}\right)^2 \sigma^2(\bm{X}_i,D_i)$, $J_k^i=\sum_{j=1}^N\mathbbm{1}\left\{i\in A_k^j\right\}$ is the number of times unit $i$ is used as a match given $k$ matches per unit, $\sigma^2(\bm{X}_i,D_i)=\mathbb{V}(Y|\bm{X}_i=\bm{x}_i,D_i=d_i)$, which can be consistently estimated by $\hat{\sigma}^2(\bm{X}_i,D_i)=\frac{k}{k+1}\left(Y_i-\frac{1}{k}\sum_{l\in L_k^i}\right)$, $L_k^i(\bm{x})$ is the set of the $k$ closest units to unit $i$ in terms of the Euclidean norm in covariates such that have the same value for the treatment variable, and $V^{\Delta(\bm{X}),T}=\mathbb{E}\left[(\left[\mu_1(\bm{X}_i)-\mu_0(\bm{X}_i)\right]-{\Delta}^{T})^2|\bm{D}=1\right]$ is the variance of the conditional average treatment effect on treated.

In an ideal situation, we select $k$ in the regression nearest neighbor framework (kNN) minimizing the mean squared error of the average treatment effect on treated (equation \ref{eq1}), that is,

\begin{equation}\label{eq2}
    \argminA_{k} \mathbb{E}\left({\hat{\Delta}^{BC}}-{\Delta}^T\right)^2.
\end{equation}

However, the average treatment effect on treated ($\Delta^{T}$) is not observed. Therefore, we follow similar arguments to \cite{Athey2015} proposing an unbiased function of potentially observable variables to define the ATET in program \ref{eq2}.

We have the following assumptions:

\begin{assumption}\label{assump1} Stable unit treatment value assumption (SUTVA)
\end{assumption}

\begin{assumption}\label{assump2} Ignorability (unconfoundedness)
    \begin{equation*}
        D_i\perp Y_i(0) | \bm{X}_i=\bm{x}_i.   
    \end{equation*}
\end{assumption}

\begin{assumption}\label{assump3} Overlap (matching)
    \begin{equation*}
        0\leq P(D_i=1 | \bm{X}_i=\bm{x}_i)<1.    
    \end{equation*}
\end{assumption}

\begin{remark}
Assumption \ref{assump1} implies not spillover, interaction or general equilibrium effects \citep{Rubin1978}. Assumption \ref{assump2} establishes that treatment assignment ignores untreated outcome given control variables \citep{Rubin1978,Rosenbaum1983}. Assumption \ref{assump3} says that it is necessary to have overlap in subsamples of $\bm{X}_i$, that is, for each treated individual it is necessary to have an analogous untreated (synthetic control) to identify the treatment effect on treated \citep{Rosenbaum1983}.
\end{remark}

We assume conditional independent and identically distributed random sample drawn from a population. This implicitly implies Assumption \ref{assump1} (SUTVA). Then, we define the \textit{individual treatment effect on treated} 

\begin{equation}\label{eq3}
    {\Delta_i^{T*}}=\frac{Y_i(D_i-P(\bm{X}_i))}{P\left[D=1\right](1-P(\bm{X}_i))},
\end{equation}

\noindent where $P\left[D=1\right]$ is the marginal probability of treatment, which is the same for every individual given random sampling, and $P(\bm{X}_i)=P\left[D_i=1|\bm{X}_i=\bm{x}\right]$ is the conditional probability of treatment given regressors $\bm{X}_i$, that is, the propensity score \citep{Rosenbaum1983}.

\begin{proposition}\label{prop1}
    Assuming a conditional independent and identically distributed random sample such that Assumptions \ref{assump2} and \ref{assump3} are satisfied, then:
    \begin{equation*}
        \mathbb{E}\left[{\Delta_i^{T*}}\big| P(\bm{X}_i),\bm{X}_i=\bm{x}\right]=\Delta^{T}.
    \end{equation*}
\end{proposition}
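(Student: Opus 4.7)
My plan is to substitute $Y_i = D_iY_i(1)+(1-D_i)Y_i(0)$ into the numerator of $\Delta_i^{T*}$, apply unconfoundedness to eliminate the conditioning on $D_i$, and track what the resulting algebraic expression contributes to the right-hand side. First I would rewrite
\begin{equation*}
Y_i\bigl(D_i-P(\bm{X}_i)\bigr) = D_i\bigl(1-P(\bm{X}_i)\bigr)Y_i(1) - (1-D_i)P(\bm{X}_i)Y_i(0),
\end{equation*}
so that once we divide by $P[D=1](1-P(\bm{X}_i))$ and condition on $\bm{X}_i=\bm{x}$ (which renders $P(\bm{X}_i)=P(\bm{x})$ deterministic and makes the conditioning on $P(\bm{X}_i)$ redundant), the problem reduces to computing $\mathbb{E}[D_iY_i(1)\mid\bm{X}_i=\bm{x}]$ and $\mathbb{E}[(1-D_i)Y_i(0)\mid\bm{X}_i=\bm{x}]$.

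For the second step I would invoke Assumption~\ref{assump2} (plus the symmetric version for $Y_i(1)$, which is standard under random sampling and SUTVA) to write $\mathbb{E}[D_iY_i(1)\mid\bm{X}_i=\bm{x}] = P(\bm{x})\mu_1(\bm{x})$ and $\mathbb{E}[(1-D_i)Y_i(0)\mid\bm{X}_i=\bm{x}] = (1-P(\bm{x}))\mu_0(\bm{x})$, where $\mu_d(\bm{x})=\mathbb{E}[Y\mid \bm{X}=\bm{x},D=d]$; Assumption~\ref{assump3} guarantees $1-P(\bm{x})>0$ so the denominator is well defined. Collecting terms and cancelling the factor $(1-P(\bm{x}))$ in the $Y_i(0)$ piece yields
\begin{equation*}
\mathbb{E}\bigl[\Delta_i^{T*}\bigm|P(\bm{X}_i),\bm{X}_i=\bm{x}\bigr] \;=\; \frac{P(\bm{x})}{P[D=1]}\bigl[\mu_1(\bm{x})-\mu_0(\bm{x})\bigr],
\end{equation*}
which by unconfoundedness is $P(\bm{x})/P[D=1]$ times the conditional ATET at $\bm{x}$.

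The hard part, and the substantive step of the proof, is the final reduction of this pointwise expression to the stated $\Delta^T$. Since $\Delta^T$ in equation~\eqref{eq1} is an average over the treated covariate distribution, the collapse of the residual factor $P(\bm{x})/P[D=1]$ is exactly Bayes' rule $f_{\bm{X}\mid D=1}(\bm{x}) = P(\bm{x})f_{\bm{X}}(\bm{x})/P[D=1]$, which converts that factor into the change of measure from the unconditional to the treated subpopulation. I would therefore complete the argument by taking an outer expectation, $\mathbb{E}\bigl[\mathbb{E}[\Delta_i^{T*}\mid \bm{X}_i]\bigr]$, and applying Bayes' rule to recover $\Delta^T$. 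Because this last step is genuinely \emph{marginal} over $\bm{X}_i$, I would flag in the write-up that the equality in Proposition~\ref{prop1} should be read in the iterated-expectation sense: pointwise in $\bm{x}$ the identity holds only up to the factor $P(\bm{x})/P[D=1]$, and reconciling this with the stated conditional form is the only non-mechanical content; everything else follows directly from Assumptions~\ref{assump2} and~\ref{assump3}.
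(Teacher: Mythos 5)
Your algebra is correct and follows essentially the same route as the paper: both proofs split $Y_i(D_i-P(\bm{X}_i))$ into a treated-outcome piece and an untreated-outcome piece and apply Assumption \ref{assump2} only to the $Y_i(0)$ part (your identity $Y_i(D_i-P)=D_i(1-P)Y_i(1)-(1-D_i)PY_i(0)$ is algebraically the same as the paper's $(1-P)D_i(Y_i(1)-Y_i(0))+Y_i(0)(D_i-P)$). Where you diverge is the last step, and there your version is the more careful one. The paper evaluates $\mathbb{E}\left[D_i(Y_i(1)-Y_i(0))\mid \bm{X}_i=\bm{x}\right]$ as $P\left[D=1\right]\mathbb{E}\left[Y_i(1)-Y_i(0)\mid D_i=1\right]$, justifying the replacement of $P\left[D_i=1\mid\bm{X}_i=\bm{x}\right]$ by the marginal $P\left[D=1\right]$ with an appeal to ``random sampling''; but random sampling does not make the propensity score constant in $\bm{x}$, so the correct factor is $P(\bm{x})$ --- exactly the residual $P(\bm{x})/P\left[D=1\right]$ you isolate. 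Your conclusion that the identity holds pointwise only when $P(\bm{x})\equiv P\left[D=1\right]$ (randomized assignment), and otherwise only in the iterated-expectation sense after the Bayes'-rule change of measure $f_{\bm{X}\mid D=1}(\bm{x})=P(\bm{x})f_{\bm{X}}(\bm{x})/P\left[D=1\right]$, is therefore a legitimate correction of the stated proposition rather than a gap in your argument; the marginal version $\mathbb{E}\left[\Delta_i^{T*}\right]=\Delta^T$ is what the cross-validation criterion actually uses, so nothing downstream is harmed. One small economy: you do not need any unconfoundedness for $Y_i(1)$, since $\mathbb{E}\left[D_iY_i(1)\mid\bm{X}_i=\bm{x}\right]=P(\bm{x})\mu_1(\bm{x})$ is definitional; Assumption \ref{assump2} is needed only to identify $\mu_0(\bm{x})$ with $\mathbb{E}\left[Y_i(0)\mid D_i=1,\bm{X}_i=\bm{x}\right]$.
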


\begin{proof}
We omit conditioning on $P\left[D=1\right]$ and $\bm{X}_i=\bm{x}$ to simplify notation. 
\begin{align*}
\mathbb{E}\left[{\Delta_i^{T*}}\big| P\left[D=1\right],\bm{X}_i=\bm{x}\right]=&\mathbb{E}\left[\frac{Y_i(D_i-P(\bm{X}_i))}{P\left[D=1\right](1-P(\bm{X}_i))}\right]\\
=&\frac{1}{P\left[D=1\right] (1-P(\bm{X}_i))}\left\{(1-P(\bm{X}_i))\mathbb{E}\left[D_i(Y_i(1)-Y_i(0))\right] \right.\\
& \left.+ \mathbb{E}\left[Y_i(0)(D_i-P(\bm{X}_i))\right]\right\}\\
=&\frac{1}{P\left[D=1\right] (1-P(\bm{X}_i))}\left\{(1-P(\bm{X}_i))\mathbb{E}\left[\mathbb{E}\left[D_i(Y_i(1)-Y_i(0))\big|D_i\right]\right] \right.\\
& \left. + \mathbb{E}\left[D_i\right]\mathbb{E}\left[Y_i(0)\right]-P(\bm{X}_i)\mathbb{E}\left[Y_i(0)\right]\right\}\\
=&\frac{1}{P\left[D=1\right] (1-P(\bm{X}_i))}\left\{(1-P(\bm{X}_i))P\left[D=1\right]\mathbb{E}\left[(Y_i(1)-Y_i(0))\big|D_i=1\right] \right.\\
& \left. + \mathbb{E}\left[Y_i(0)\right](\mathbb{E}\left[D_i\right]-P(\bm{X}_i))\right\}\\
=&\mathbb{E}\left[(Y_i(1)-Y_i(0))\big|D_i=1\right]
\end{align*}
The first equality is by definition, the second is after some simply algebra and taking into account that $Y_i=D_i Y_i(1)+(1-D_i) Y_i(0)$ and $D_i Y_i=D_i Y_i(1) + D_i Y_i(0) - D_i Y_i(0)$ given $D_i=\left\{0,1\right\}$. The third is an application of the law of iterative expectations  for the first term, and the Assumption \ref{assump2} (ignorability) for the second term. The fourth equality takes again into account that $D_i=\left\{0,1\right\}$, and $P\left[D_i=1\right]=P\left[D=1\right]$ (random sampling). Finally, we take into account that $\mathbb{E} \left[D_i|\bm{X}_i=\bm{x}_i\right]=P\left[D_i=1|\bm{X}_i=\bm{x}\right]=P(\bm{X}_i)$. Notice that Assumption \ref{assump3} is required to have $P\left[D=1\right](1-P(\bm{X}_i))\neq 0$.

\begin{remark}
Observe that $\mathbb{E}\left[Y_i(0)D_i\right]=P\left[D=1\right]\mathbb{E}\left[Y_i(0)|D_i=1\right]$. Therefore, we also obtain the required statement, if 

\begin{equation*}
    \mathbb{E}\left[Y_i(0)|D_i=1\right]=\mathbb{E}\left[Y_i(0)|D_i=0\right]=\mathbb{E}\left[Y_i(0)\right]
\end{equation*}
conditional on $\bm{X}_i=\bm{x}_i$ is satisfied, that is, the conditional mean assumption on control group is satisfied \citep[p.~1316]{Angrist1999}. Outcome of untreated individuals does not determine participation.
\end{remark}
\end{proof}

We require estimators for $P\left[D=1\right]$ and $P(\bm{X}_i)$ to estimate $\Delta_i^{T*}$. We use $\frac{1}{N}\sum_{i=1}^N D_i$, $N=N_0+N_1$, as estimator for the marginal probability of treatment, and a logit model to estimate the conditional probability of treatment. Obviously, these estimators can be changed. Notice that this introduces an extra source of variability as we do not observe $P\left[D=1\right]$ and $P(\bm{X}_i)$ in observational data sets. This kind of issue is discussed by \cite{Athey2015} who point out that transformations like equation \ref{eq4} does not optimally use all available information, this may mean extra variability. Additionally, \cite{Abadie2016} show that ATE using matching based on the estimated propensity score is more efficient than matching based on the true propensity score. On the other hand, ATET using the estimated propensity score can be more o less efficient compared to the true propensity score. We performed some simulation exercises that show better inferential performance based on the estimated propensity score (available upon authors request). In addition, as our ATET (ATE) is asymptotically biased corrected, the mean squared error decomposition shows that variance explains the most part (75\% on average), as expected. However, it remains a proportion to finite sample bias effect (25\% on average). 

Give that ${\Delta_i^{T*}}$ in equation \ref{eq3} is a conditional unbiased estimator for $\Delta^T$ (Proposition \ref{prop1}), we select $k$ in the kNN regression framework such that

\begin{equation}\label{eq4}
    \argminA_{{k}}\frac{1}{G}\sum_{g=1}^{G}\left\{ \left(\hat{\Delta}_g^{BC}-\frac{1}{N_g}\sum_{i=1}^{N_g}\widehat{\Delta_{i,g}^{T*}}\right)^2\right\},
\end{equation}

\noindent where 

\begin{equation*}
    \widehat{\Delta_{i,g}^{T*}}=\left(\frac{1}{N_g}\sum_{i=1}^{N_g}D_{i,g}^{Test}\right)^{-1}\left(\frac{Y_{i,g}^{Test}(D_{i,g}^{Test}-\widehat{P(\bm{X}_{i,g}^{Test})})}{1-\widehat{P(\bm{X}_{i,g}^{Test}})}\right),
\end{equation*}

\begin{equation*}
    \hat{\Delta}_g^{TM}=\frac{1}{N_{1,g}}\sum_{i\in\left\{D_{i,g}^{Test}=1\right\}}\left[Y_{i,g}^{Test}(1)-\frac{1}{K}\sum_{j\in A_K^i(\bm{x}_g^{Test})}Y_{j,g}^{Train}(0)\right],
\end{equation*}

\begin{equation*}
    \hat{B}_g^{TM}=\frac{1}{N_{1,g}}\sum_{i\in\left\{D_{i,g}^{Test}=1\right\}}\left[\hat{\mu}_0(\bm{x}_i)_g^{Test}-\frac{1}{K}\sum_{j\in A_K^i(\bm{x}_g^{Test})}\hat{\mu}_0(\bm{x}_j)_{g}^{Train}\right],
\end{equation*}

\noindent and

\begin{equation*}
   \hat{\Delta}_g^{BC}=\hat{\Delta}_g^{TM}-\hat{B}_g^{TM} 
\end{equation*}

\noindent where $G$ is the number of groups, $N_g$ is the sample size of each group, and $N_{1,g}$ is the sample size of the treated group in group $g$, $g=1,2,\dots,G$. Superscripts $Train$ and $Test$ refer to train and test data sets (see below).

Notice that $\widehat{\Delta_{i,g}^{T*}}$ is the sample version of the conditional unbiased estimator of the ATET $\Delta_i^{T*}$, so its average is a consistent estimator for ATET. On the other hand, $\hat{\Delta}_g^{BC}$ is bias-corrected version of the matching estimator \citep{Abadie2006}.

kNN regression is a standard approach in the machine and statistical learning communities \citep[p.~14]{Hastie2009}. So, we follow $k$-fold cross-validation to select the optimal $k$ \citep[p.~241]{Hastie2009}. In particular, we randomly split the data set in $G$ roughly equal-sized groups such that we keep the proportion between treated and untreated individuals in each group, then we calculate $\frac{1}{N_g}\sum_{i=1}^{N_g}\widehat{\Delta_{i,g}^{T*}}$, $g=1,2,\dots,G$. We fit the model using $G-1$ groups, that is, we fix $k$, and identify the $k$ nearest neighbors in the control group at the training data set in terms of the Euclidean norm to individual $i$ in the treated group at the $G$-th left group (test data set), that is,

\begin{equation*}
|A_k^i(\bm{x}_g^{Test})|=\sum_{j\in A_k^i(\bm{x}_g^{Test})}\mathbbm{1}\left\{||\bm{X}_{i,g}^{Test}-\bm{X}_j^{Train}||<||\bm{X}_{i,g}^{Test}-\bm{X}_l^{Train}||\right\}=k,    
\end{equation*}

\noindent then we estimate $\frac{1}{k}\sum_{j\in A_k^i(\bm{x}_g^{Test})}Y_{j,g}^{Train}(0)$, and obtain $\hat{\Delta}_{i,g}^{TM}=Y_{i,g}^{Test}(1)-\frac{1}{k}\sum_{j\in A_k^i(\bm{x}_g^{Test})}Y_{j,g}^{Train}(0)$ and $\hat{\Delta}_{i,g}^{BC}=\hat{\mu}_0(\bm{x}_i)_{g}^{Test}-\frac{1}{k}\sum_{j\in A_k^i(\bm{x}_g^{Test})}\hat{\mu}_0(\bm{x}_j)_{g}^{Test}$ for $i\in\left\{D_{i,g}^{Test}=1\right\}$. So, we can calculate the mean squared error for the $G$-th left group. This is done for $g=1,2,\dots,G$ and different values of $k$ such that we average over the $G$ prediction errors given $k$, and select the $k$ which has the minimum average error (see program \ref{eq4}).

Given an optimal $k=k^*$, we use equation \ref{eqBC_ATET} to estimate the average treatment effect on treated using the whole sample size. As we said, this estimator converges in distribution to a normal distribution, so we use the asymptotic result to build confidence intervals.

Algorithm \ref{Alg1} summarizes our methodological proposal:

\begin{algorithm}
	\caption{Optimal $k$ in kNN to estimate ATET}\label{Alg1}
	\begin{algorithmic}[1] 
	    \State Calculate $\frac{1}{N}\sum_{i=1}^N D_i$.
	    \For{\texttt{$i=1,2,\dots,N$}}
	        \begin{itemize}
	            \item  Fit a logit model where the dependent variable is $D_i$ to get $\widehat{P(\bm{X}_i)}$.
	            \item Calculate the $\widehat{\Delta_i^{T*}}$, the sample version of equation \ref{eq3}.
	        \end{itemize}
	    \EndFor
	    \State Split randomly the data set in $G$ roughly equal-sized groups such that the proportion between treated and untreated in each group is keep the same as in the original data set.
	    \State Let us set the training data set as $G-1$ groups, and the test data set the $g$-th left group.
	    \For{\texttt{$k=1,2,\dots,K$}}
	        \For{\texttt{$g=1,2,\dots,G$}}
	            \begin{itemize}
	                \item Calculate $\frac{1}{N_g}\sum_{i=1}^{N_g}\widehat{\Delta_{i,g}^{T*}}$, where $N_g$ is the sample size of $g$-th group (test data set). 
	                \item Find the $k$ nearest neighbors of $\bm{X}_{i,g}$ for treated individuals of the $g$th group ($D_{i,g}=1$ in the test data set) among the untreated individuals in the training data set ($D_{j,h\notin g}=0$ in the remaining $G-1$ groups).
	                \item Calculate $\frac{1}{k}\sum_{j\in A_k^i(\bm{x}_g^{Test})}Y_{j,g}^{Train}(0)$ and $\frac{1}{k}\sum_{j\in A_k^i(\bm{x}_g^{Test})}\hat{\mu}_0(\bm{x}_j)_{g}^{Train}$.
	                \item Obtain $\hat{\Delta}_{i,g}^{TM}=Y_{i,g}^{Test}(1)-\frac{1}{k}\sum_{j\in A_k^i(\bm{x}_g^{Test})}Y_{j,g}^{Train}(0)$ and $\hat{B}_{i,g}^{TM}=\hat{\mu}_0(\bm{x}_i)_{g}^{Test}-\frac{1}{k}\sum_{j\in A_k^i(\bm{x}_g^{Test})}\hat{\mu}_0(\bm{x}_j)_{g}^{Train}$ for $i\in\left\{D_{i,g}^{Test}=1\right\}$.
	                \item Estimate $\hat{\Delta}_g^{TM}=\frac{1}{N_{1,g}}\sum_{i\in\left\{D_{i,g}^{Test}=1\right\}}\hat{\Delta}_{i,g}^{TM}$, $\hat{B}_g^{TM}=\frac{1}{N_{1,g}}\sum_{i\in\left\{D_{i,g}^{Test}=1\right\}}\hat{B}_{i,g}^{TM}$ and $\hat{\Delta}_g^{BC}=\hat{\Delta}_g^{TM}-\hat{B}_g^{TM}$ 
	            \end{itemize}
	       \EndFor
	       \begin{itemize}
	           \item Calculate the mean squared error: $$\frac{1}{G}\sum_{g=1}^{G}\left\{ \left(\hat{\Delta}_g^{BC}-\frac{1}{N_g}\sum_{i=1}^{N_g}\widehat{\Delta_{i,g}^{T*}}\right)^2\right\}$$
	       \end{itemize}
	       \EndFor
	       \State Select $k$ that minimizes the mean squared error ($k^*$).
	       \For{\texttt{$i=1,2,\dots,N_1$}}
	        \begin{itemize}
	           \item Given $k^*$ and the whole data set, find the $k^*$ nearest neighbors in the untreated group to each individual in the treated group.
	           \item Calculate $\hat{\Delta}_i^{BC}=Y_i(1)-\frac{1}{k^*}\sum_{j\in A_{k^*}^i(\bm{x})}Y_j(0)$
	           \end{itemize}
	       \EndFor
	       \State Use $\hat{\Delta}^{BC}$ as a measure of central tendency of the ATET at neighborhood level.
	       \State Use theorems 4 and 7 in \cite{Abadie2006} to build confidence intervals of the ATET. 
	\end{algorithmic}
\end{algorithm}

\section{A Monte Carlos study}\label{Simulations}

We set exactly the same simulation setting as \cite{Otsu2017} who proposed a weighted bootstrap to perform inference of matching estimators for ATE and ATET as it is well known that naive bootstrap produces invalid inference due to failing to reproduce the distribution of the number of times each unit is used as a match \citep{Abadie2008}.

We show the simulation setting for exposition purposes:

\begin{align*}
    Y_i(1) = & \tau + m_j(||\bm X_i||)+\epsilon_i,\\
    Y_i(0) = & m_j(||\bm X_i||)+\epsilon_i,\\
    D_i = & \mathbbm{1}\left\{P(\bm X_i)\geq v_i\right\}, \ v_i\sim U\left[0,1\right],\\
    P(\bm X_i) = & \gamma_1+\gamma_2 ||\bm X_i||, \ X_i=\left[X_{i1},\dots,X_{iP}\right]^{\top},\\
    X_{ij}  = & \psi_i|\zeta_{ij}|/||\zeta_{i}||, \ j=1,2,\dots,P,\\
    & \psi_i\sim U\left[0,1\right], \ \zeta_i\sim N(\bm 0, \bm I_P), \ \epsilon_i\sim N(0,0.2^2),
\end{align*}

where $\epsilon_i$, $v_i$, $\psi_i$ and $\zeta_i$ are mutually independent. The ATET as well as ATE is $\tau = 0.5$,  $\gamma_1=0.15$, $\gamma=0.7$, $P=5$ and there are six different curves for $m_j(||\bm X_i||)$ (see \cite{Otsu2017} for details):

\begin{align*}
    m_1(||\bm X_i||) = & 0.15+0.7||\bm X_i||,\\
    m_2(||\bm X_i||) = & 0.1+0.5||\bm X_i||+0.5\exp \left(-200(||\bm X_i||-0.7)^2\right),\\
    m_3(||\bm X_i||) = & 0.8-2(||\bm X_i||-0.9)^2-5(||\bm X_i||-0.7)^3-10(||\bm X_i||-0.6)^{10},\\
    m_4(||\bm X_i||)= & 0.2+(1-||\bm X_i||)^{0.5}-0.6(0.9-||\bm X_i||)^2,\\
    m_5(||\bm X_i||)= & 0.2+(1-||\bm X_i||)^{0.5}-0.6(0.9-||\bm X_i||)^2-0.1||\bm X_i|| \cos (30||\bm X_i||),\\
    m_6(||\bm X_i||)= & 0.4+0.25\sin (8||\bm X_i||-5)+0.4\exp \left(-16(4||\bm X_i||-2.5)^2\right).\\
\end{align*}

In addition, we perform other simulation exercises where the outcome variable is binary. In particular, 

\begin{align*}
    Y_i(1) & = \begin{Bmatrix} 1, & \tau + \beta m_j(||\bm X_i||)+\epsilon_i>0\\
    0, & \tau + \beta m_j(||\bm X_i||)+ \epsilon_i\leq 0\end{Bmatrix},\\
    Y_i(0) & = \begin{Bmatrix} 1, & \beta m_j(||\bm X_i||)+\epsilon_i>0\\
    0, &  \beta m_j(||\bm X_i||)+\epsilon_i\leq 0\end{Bmatrix},
\end{align*}

where $\epsilon_i\sim LG(0,1)$, $LG$ denotes a logistic distribution, $\beta=0.5$, $\tau= 0.5$, and $P(D_i=1)=P(\gamma_1+\gamma_2 ||\bm X_i||>v_i)=F_{LG}(\gamma_1+\gamma_2 ||\bm X_i||)$, $\gamma_1=0.15$, $\gamma_2=0.4$, $v_i\sim LG(0, 1)$ and $F_{LG}$ is the distribution function. Other components of the specification are as in the previous setting.

Observe that in this setting $ATE_i=\mathbbm{E}(Y_i(1)-Y_i(0)|\bm X_i=\bm x_i, Z_i = z_i)=P(Y_i(1)=1|\bm X_i=\bm x_i, Z_i = z_i)-P(Y_i(0)=0|\bm X_i=\bm x_i, Z_i = z_i)=F_{LG}(\tau + \beta m_j(||\bm X_i||))-F_{LG}(\beta m_j(||\bm X_i||))$, which has the same analytical expression for the ATET, but conditional on the relevant sample ($D_i=1$).
    
Our results are based on a sample size equal to 100 \citep{Otsu2017}.\footnote{We perform other simulation exercises using larger sample sizes. Outcomes show similar results. Available upon authors request.} We have 1,000 replications for each simulation exercise.

First, we present the results for the average treatment effects on treated for the continuous outcome variable.\footnote{Results for the average treatment effects are in the Appendix, subsection \ref{simATE}. Outcomes are similar to ATET.} Table \ref{tab:MRSE1} displays the mean relative squared errors (MRSE),

\begin{equation*}
    MRSE = \frac{1}{S_k}\sum_{s=1}^{S_k} \left(\frac{\hat{\Delta}^{BC}_s(k)-ATET}{\hat{\Delta}^{BC}_s(k^*)-ATET}\right)^2, \ k^*\neq k, k=1,2,\dots 20,
\end{equation*}
where $\hat{\Delta}^{BC}_s(k)$ and $\hat{\Delta}^{BC}_s(k^*)$ are the ATET estimates using any $k$ and the optimal $k=k^*$, and $S_k$ is the number of times that $k$ was not optimal. 

We see in Table \ref{tab:MRSE1} that there are relative large  advantages using our optimal $k^*$ as all figures are larger than 1.
\begin{table}[]
\caption{Mean relative squared errors continuous outcome: Average treatment effects on treated}\label{tab:MRSE1}
\centering
\begin{tabular}{lllllll}
\hline
$k$  & $m_{1}$ & $m_{2}$ & $m_{3}$ & $m_{4}$ & $m_{5}$ & $m_{6}$ \\ \hline
1  & 479,945.8 & 361.6    & 71.9     & 74.9     & 64,319.7 & 5,308.1  \\
2  & 14,005.9  & 194.9    & 90.9     & 54.2     & 462.4    & 24,887.9 \\
3  & 658.6     & 126.7    & 129.6    & 10.4     & 18,965.7 & 7,391.0  \\
4  & 25,471.9  & 85.2     & 118.5    & 14.5     & 11.4     & 92.9     \\
5  & 805.4     & 43.1     & 121.7    & 11.0     & 2,366.4  & 101.1    \\
6  & 9,758.6   & 19.9     & 138.3    & 19.5     & 7,236.0  & 290.2    \\
7  & 15,678.7  & 20.7     & 142.0    & 21.0     & 29,098.9 & 171.5    \\
8  & 1,519.8   & 16.6     & 177.4    & 12.9     & 27,464.4 & 183.4    \\
9  & 29,276.6  & 14.4     & 193.9    & 10.2     & 23,308.7 & 336.1    \\
10 & 10,104.8  & 19.2     & 220.3    & 14.0     & 38,362.6 & 574.2    \\
11 & 1,616.9   & 19.8     & 249.1    & 15.1     & 41,302.2 & 709.1    \\
12 & 6,571.2   & 22.3     & 251.4    & 28.6     & 38,928.3 & 1,318.5  \\
13 & 23,269.0  & 25.5     & 336.5    & 47.6     & 40,001.6 & 1,474.0  \\
14 & 14,743.3  & 27.5     & 400.7    & 38.0     & 46,703.4 & 2,351.0  \\
15 & 10,691.7  & 29.2     & 395.8    & 55.7     & 42,344.9 & 3,035.4  \\
16 & 23,648.5  & 33.4     & 424.8    & 92.7     & 44,138.8 & 3,372.0  \\
17 & 19,488.5  & 41.6     & 459.2    & 126.6    & 45,173.9 & 4,027.6  \\
18 & 20,453.9  & 44.2     & 502.1    & 172.5    & 41,846.3 & 3,577.2  \\
19 & 26,292.6  & 44.3     & 486.7    & 214.3    & 50,218.6 & 3,886.7  \\
20 & 29,382.8  & 45.0     & 547.4    & 263.2    & 51,969.0 & 2,973.6   \\\hline
\multicolumn{7}{l}{\footnotesize $MRSE = \frac{1}{S_k}\sum_{s=1}^{S_k} \left(\frac{\hat{\Delta}^{BC}_s(k)-ATET}{\hat{\Delta}^{BC}_s(k^*)-ATET}\right)^2, \ k^*\neq k, k=1,2,\dots 20$ where}\\
\multicolumn{7}{l}{\footnotesize  $\hat{\Delta}^{BC}_s(k)$ and $\hat{\Delta}^{BC}_s(k^*)$ are the ATET estimates using any $k$ and}\\
\multicolumn{7}{l}{\footnotesize  the optimal $k=k^*$, and $S_k$ is the number of times that $k$ was not}\\
\multicolumn{7}{l}{\footnotesize optimal. It seems that there are relative large advantages using}\\
\multicolumn{7}{l}{\footnotesize our optimal $k^*$.}\\
\hline\hline
\end{tabular}
\end{table}

Figure \ref{fig:IWRCT} shows the average over 1,000 replications of the ratios between the length of the 95\% confidence intervals associated with any number of neighbors ($k$, $x$ axis) and the optimal selection using our proposal. This is done for the six data generating settings ($m_j(||\bm X_i||), j=1,2,\dots 6$). We see in this figure that a small $k$ implies relative wide intervals, whereas large $k$ implies narrow intervals. This is expected as small $k$ implies high variability compared with large $k$. Observe that intervals lengths decreases very fast, and stabilize around $k=10$. On the other hand, we see in Figure \ref{fig:TypeICT} that the type I error due to rejecting the null hypothesis $ATET=\tau$, given a significance level equal to 5\%, is the lowest using $k=1$. Using $k$ minimizing the mean squared error gives a relatively small type I error compared with other $k$ selection.

Figure \ref{fig:TypeICTF} displays the histograms of the optimal $k$ in our six simulation settings. We see that optimal selection is dominated by small $k$. This could be evidence to support $k=1$, which is widely used in many applications. Although, it seems that $k=\left\{2,3\right\}$ are the most relevant, and large values of $k$ cannot be discarded, especially when using large sample sizes (we performed other simulations exercises that show this). 

\begin{figure}
\centering
  \includegraphics[scale=0.5]{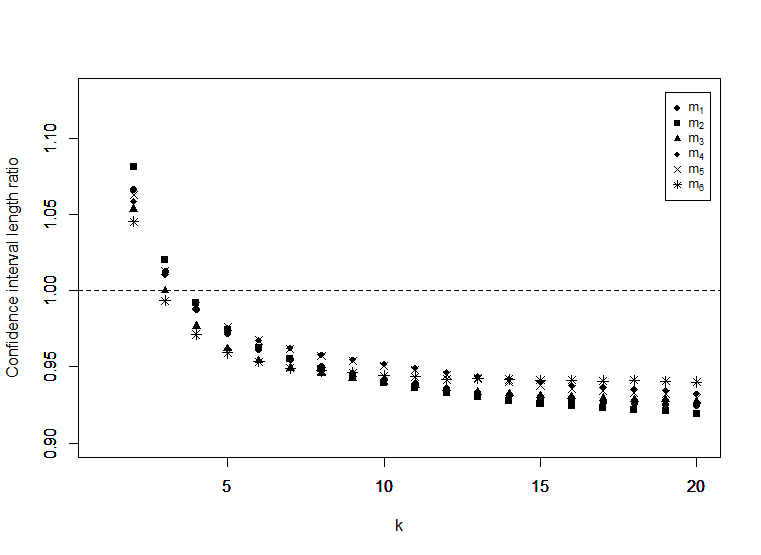}
\caption{95\% confidence interval length ratio continuous outcome: Average treatment effects on treated. Small $k$ implies relative wide intervals, whereas large $k$ implies narrow intervals. The dotted line is optimal selection of $k$ minimizing mean squared error. By construction has a ratio equal to 1.}
  \label{fig:IWRCT}
\end{figure}

\begin{figure}
\centering
  \includegraphics[scale=0.5]{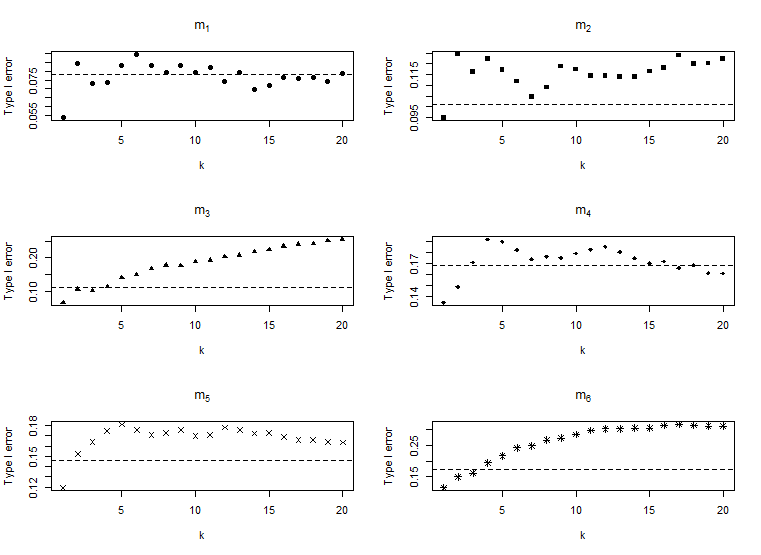}
\caption{Type I errors (5\% nominal size) continuous outcome: Average treatment effects on treated. Optimal $k$ gives a relatively small type I error compared with other $k$ selection. $k=1$ always gets the best nominal size.}
  \label{fig:TypeICT}
\end{figure}

\begin{figure}
\centering
  \includegraphics[scale=0.5]{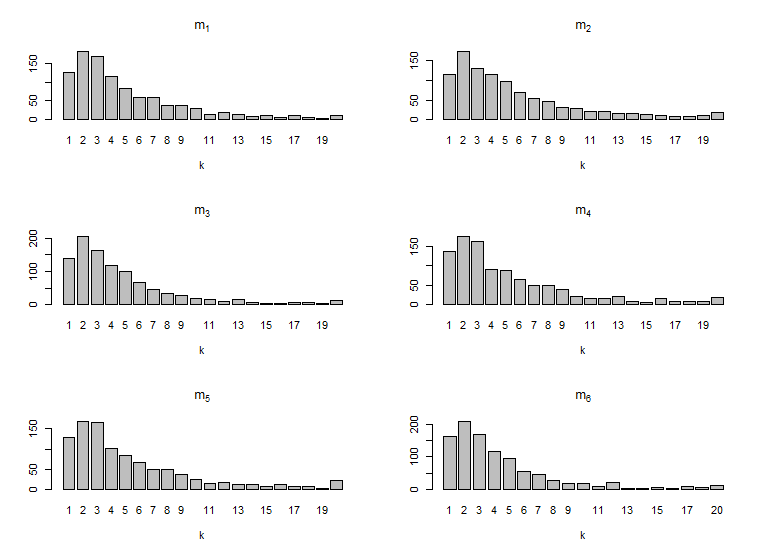}
\caption{ Optimal $k$ frequencies continuous outcome: Average treatment effects on treated. It seems that small $k$ dominates optimal $k$ choice. Although, large $k$ cannot be discarded.}
  \label{fig:TypeICTF}
\end{figure}

Results of the ATET for our simulation setting using a binary outcome are qualitatively similar to the continuous outcome. Table \ref{tab:MRSE2} show the mean relative squared errors. Again, all figures are larger than 1, which means that using a non-optimal $k$ implies larger estimation errors.

\begin{table}[]
\caption{Mean relative squared errors binary outcome: Average treatment effects on treated}\label{tab:MRSE2}
\centering
\begin{tabular}{lllllll}
$k$  & $m_{1}$ & $m_{2}$ & $m_{3}$ & $m_{4}$ & $m_{5}$ & $m_{6}$ \\ \hline
1  & 168.8    & 143.8    & 185.1    & 222.2    & 8,684.6  & 2,581.9  \\
2  & 103.1    & 95.7     & 72.8     & 52.1     & 51.5     & 1,342.0  \\
3  & 54.6     & 54.3     & 69.7     & 56.1     & 56.1     & 737.1    \\
4  & 56.4     & 21.3     & 23.9     & 17.9     & 114.1    & 441.2    \\
5  & 50.0     & 23.1     & 30.6     & 27.0     & 481.3    & 478.2    \\
6  & 37.2     & 13.7     & 15.6     & 20.9     & 1,460.7  & 466.2    \\
7  & 51.7     & 13.1     & 18.3     & 24.3     & 1,228.6  & 256.8    \\
8  & 18.3     & 18.5     & 26.5     & 22.3     & 1,917.7  & 230.9    \\
9  & 34.8     & 17.0     & 26.1     & 27.0     & 2,316.8  & 317.2    \\
10 & 86.7     & 26.4     & 34.6     & 36.3     & 2,720.1  & 463.3    \\
11 & 178.4    & 34.6     & 44.4     & 62.4     & 2,198.2  & 352.9    \\
12 & 188.5    & 36.4     & 39.2     & 54.6     & 2,082.1  & 314.2    \\
13 & 246.8    & 36.0     & 38.8     & 60.4     & 3,024.4  & 363.5    \\
14 & 327.1    & 41.5     & 42.7     & 68.9     & 3,977.1  & 517.1    \\
15 & 383.3    & 47.5     & 47.5     & 88.9     & 5,120.8  & 709.2    \\
16 & 316.1    & 50.2     & 51.8     & 92.2     & 7,136.7  & 888.7    \\
17 & 292.2    & 51.1     & 46.4     & 98.8     & 8,163.2  & 1,222.5  \\
18 & 274.7    & 54.6     & 54.4     & 110.8    & 8,501.3  & 1,175.7  \\
19 & 254.6    & 55.6     & 60.3     & 110.4    & 8,730.9  & 1,105.3  \\
20 & 248.2    & 61.1     & 64.3     & 124.8    & 9,135.4  & 1,050.8 \\ \hline

\multicolumn{7}{l}{\footnotesize $MRSE = \frac{1}{S_k}\sum_{s=1}^{S_k} \left(\frac{\hat{\Delta}^{BC}_s(k)-ATET}{\hat{\Delta}^{BC}_s(k^*)-ATET}\right)^2, \ k^*\neq k, k=1,2,\dots 20$}\\
\multicolumn{7}{l}{\footnotesize  where $\hat{\Delta}^{BC}_s(k)$ and $\hat{\Delta}^{BC}_s(k^*)$ are the ATET estimates using}\\
\multicolumn{7}{l}{\footnotesize  any $k$ and the optimal $k=k^*$, and $S_k$ is the number of times}\\
\multicolumn{7}{l}{\footnotesize that $k$ was not optimal. It seems that there are relative}\\
\multicolumn{7}{l}{\footnotesize large advantages using an optimal $k$.}\\
\hline\hline
\end{tabular}
\end{table}

Figure \ref{fig:IWRCTbin} shows the relative 95\% confidence intervals length ratios associated with $k$ and $k^*$. All six simulation settings look very similar with wider confidence intervals when $k$ is small, and narrower intervals when $k$ is large. There is stabilization around $k=10$. Figure \ref{fig:TypeICTbin} shows type I errors using 5\% as significance level. Larger $k$ implies larger errors, and using $k^*$ implies small type I errors in general, except when $k$ is very small. Small type I error can be explained by means of Figure \ref{fig:TypeICTbinF} where we see that optimal selection of $k$ is also based on small $k$. Here $k^*=\left\{2,3,4,5\right\}$ are particularly relevant. 

In general, we can observe that our proposal implies relatively smaller estimation errors achieving a compromise between confidence interval length and type I error having as a result of small optimal $k$ most of the times.   

\begin{figure}
\centering
  \includegraphics[scale=0.5]{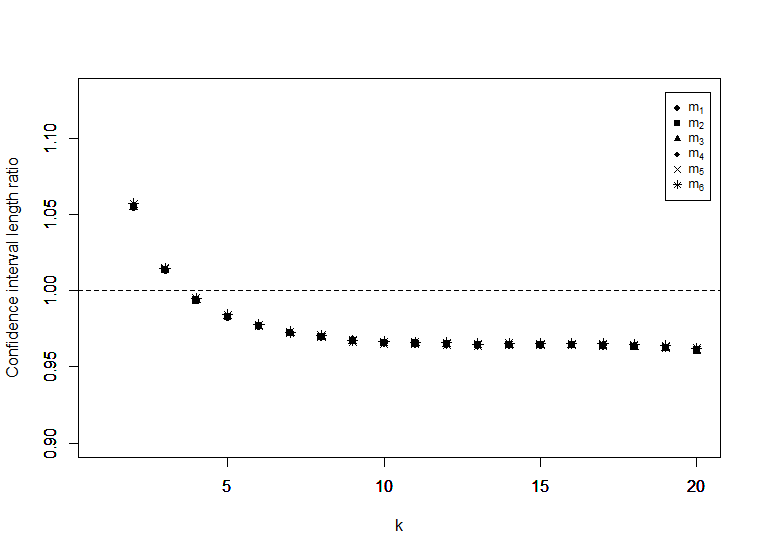}
\caption{95\% confidence interval length ratio binary outcome: Average treatment effects on treated. Small $k$ implies relative wide intervals, whereas large $k$ implies narrow intervals. The dotted line is optimal selection of $k$ minimizing mean squared error. By construction has a ratio equal to 1.}
  \label{fig:IWRCTbin}
\end{figure}

\begin{figure}
\centering
  \includegraphics[scale=0.5]{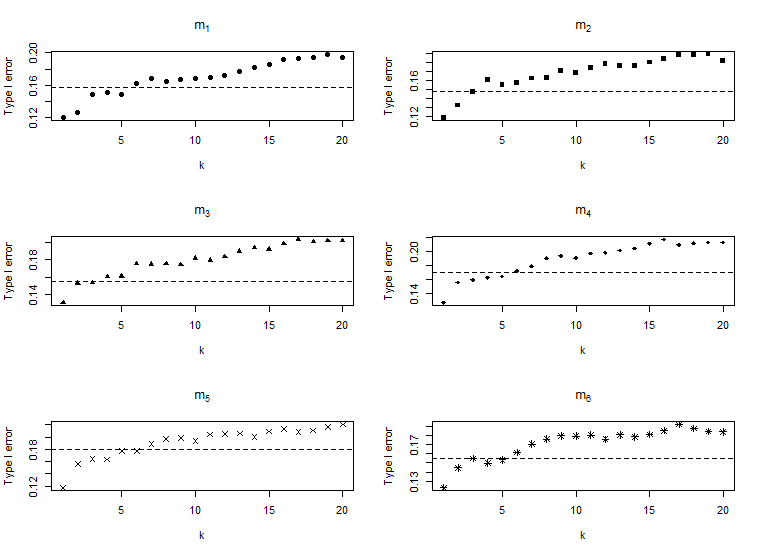}
\caption{Type I errors (5\% nominal size) binary outcome: Average treatment effects on treated. Optimal $k$ gives a relatively small type I error compared with other $k$ selection. $k=1$ always gets the best nominal size.}
  \label{fig:TypeICTbin}
\end{figure}

\begin{figure}
\centering
  \includegraphics[scale=0.5]{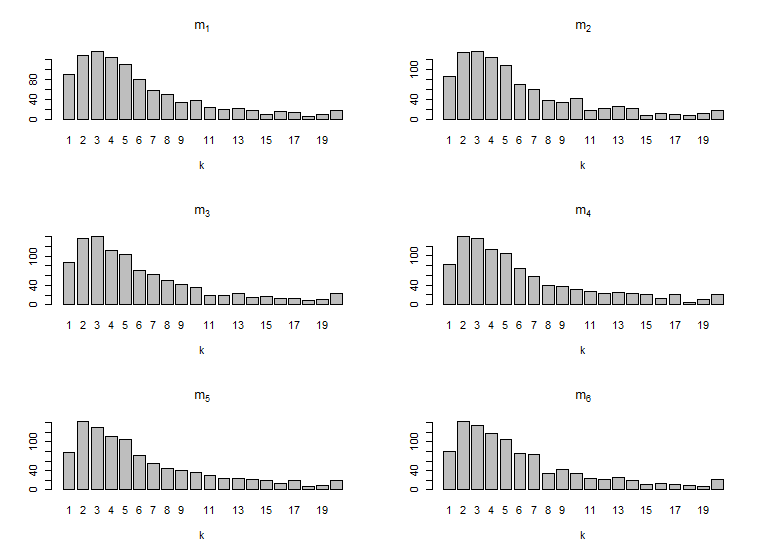}
\caption{ Optimal $k$ frequencies binary outcome: Average treatment effects on treated.  It seems that small $k$ dominates optimal $k$ choice. Although, large $k$ cannot be discarded.}
  \label{fig:TypeICTbinF}
\end{figure}

\section{Empirical example}\label{Application}

To illustrate our proposal we estimate the average treatment effect on treated of 401(k) participation on accumulated net financial assets. 401(k) is a retirement plan where savings contributions are provided by employers deduced from employees' payment before taxes such that taxes on capital gains are avoided. Employees should achieve some eligibility criteria such that the plan does not have universal applicability in principle. The impact evaluation of 401(k) on net financial assets has been done previously by \cite{Benjamin2003,chernozhukov2004effects, Conley2012,chernozhukov2017double} among others. We overcome the lack of random assignment following \cite{chernozhukov2017double} arguments. Enrollment in a 401(k) plan is exogenous when controlling for observable variables which drive employment decisions, such as income, when the plan initially became available. Therefore, our dataset is from the 1991 Survey of Income and Program Participation composed by 9,915 observations.

We control for the same set of variables as \cite{chernozhukov2004effects}. In particular, age (linear and squared), education (years), household size, income (seven levels), and binary variables indicating: defined benefit pension, home ownership, marital status (married), participation in IRA (individual retirement account) and two-earners status (both household heads contribute to household income). Details can be found in \cite{Benjamin2003}.  

Table \ref{tab:desc} shows unconditional mean difference tests of our outcome variables: accumulated net financial assets, and a binary variable indicating positive accumulated net financial assets. Employees enrolled in 401(k) plans have \$27.4 more accumulated net financial assets on average than employees who are not enrolled. The unconditional probability of positive net financial assets is 31\% on average higher for the former. We can see that there are unconditional statistically significant differences.

\begin{table}[]\caption{Descriptive statistics: Outcome variables}
\begin{tabular}{@{}llll@{}}
\hline
Response                                     & Mean treated  & Mean untreated & Statistic \\ \hline
Continuous & \$38.26K (79.09) & \$10.89K (55.26) & 16.28$^{**}$ \\
Binary  & 0.86 (0.35)         & 0.55 (0.5)          & 35.27 8$^{**}$      \\ \hline

\multicolumn{4}{l}{{\footnotesize Standard error in parenthesis. The test of the difference of two proportions uses a normally}}\\ 
\multicolumn{4}{l}{{\footnotesize  distributed test statistic calculated as $z=\frac{\hat{p}_1-\hat{p}_2}{(\hat{p}_p(1-\hat{p}_p)(1/n_1+1/n2))^{0.5}}$, $\hat{p}_p=\frac{x_1+x_2}{n_1+n_2}$, $x_l$ and $n_l$ are}}\\
\multicolumn{4}{l}{{\footnotesize  the number of successes and sample sizes in each group, $l=1,2$. The test of the difference}}\\
\multicolumn{4}{l}{{\footnotesize of two means uses a Student's t calculated as $t=\frac{\bar{x}_1-\bar{x}_2}{(s_{x_1}^2/n_1+s_{x_2}^2/n_2)^{0.5}}$.}}\\
\multicolumn{4}{l}{{\footnotesize $^1$ Null hypothesis: means are equal. Critical value at 5\% level of significance is 1.96}}\\
\multicolumn{4}{l}{{\footnotesize $^{**}$ Rejection of null hypothesis at 5\% significance level.}}\\

\hline
\hline

\end{tabular}\label{tab:desc}
\end{table}

Figure \ref{fig:visit} shows the ATET in the accumulated net financial assets using our optimal selection, $k^*=19$, and $k=1$, as a reference. We see similar ATET point estimates, which are approximately \$ 15K (similar outcomes are found by \cite{Conley2012} under the exogeneity assumption using a Bayesian approach for a local average treatment effects, LATE). However, the 95\% confidence interval using $k^*$ is (\$10.9K,\$18.9K), whereas $k=1$ interval is (\$9.1K,\$19.9K), that is 35\% larger. Results regarding the probability of positive accumulated net financial assets can be seen in Figure \ref{fig:emp}. Both, $k^*=19$ and $k=1$ indicate statistically significant ATET whose point estimate is approximately 19.9\% and 18.7\% respectively. The 95\% confidence intervals are (16.3\%, 21.3\%) and (17.7\%, 21.3\%) for $k=1$ and $k^*=19$. This means that using $k=1$ implies a 38.9\% wider interval than using $k^*=19$. Observe that both exercises indicate the same optimal $k$. This means that diagnostic balance tests are based on the same control groups. Figure \ref{fig:loveV} shows the love plot indicating that control groups based on $k^*=19$ satisfy mean balancing conditions.      

\begin{figure}
\centering
  \includegraphics[scale=0.5]{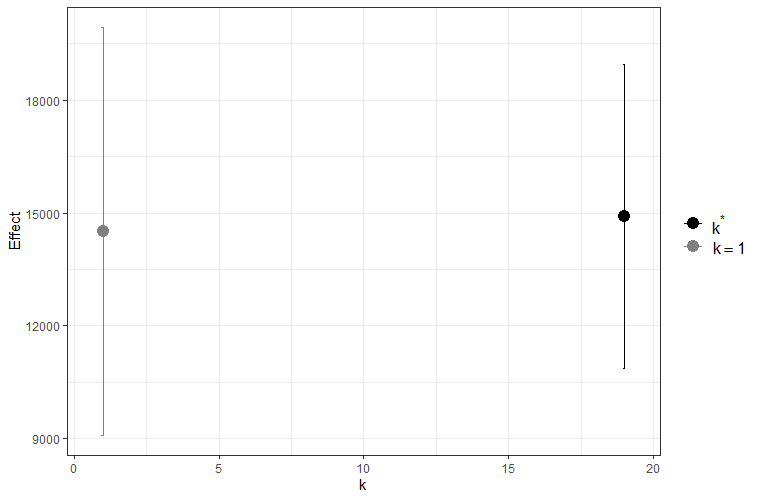}
\caption{Average treatment effects on treated due to 401(k) participation on accumulated net financial assets: Point estimates with $k=1$ (gray dot) and optimal $k^*$ (black dot), and 95\% confidence intervals. Both, $k=1$ and $k^*$ indicate 5\% significant positive ATET of approximately \$15K. 95\% confidence intervals using $k=1$ are 28\% wider compared to $k^*$.}
  \label{fig:visit}
\end{figure}

\begin{figure}
\centering
  \includegraphics[scale=0.5]{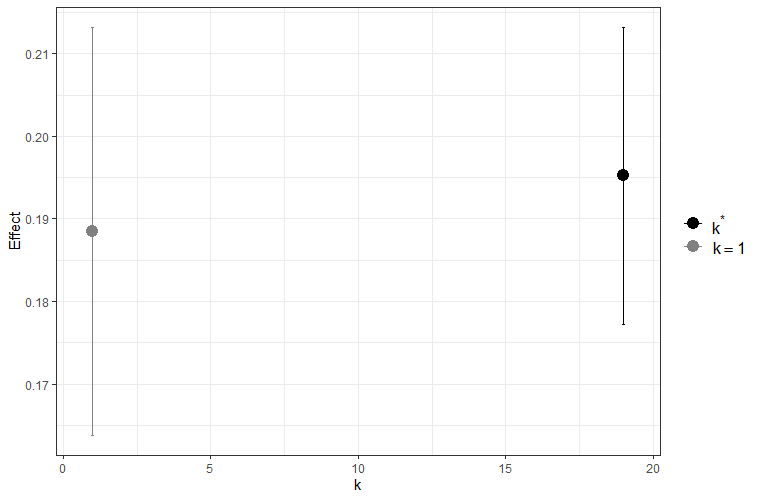}
\caption{Average treatment effects on treated due to 401(k) participation on the probability of positive accumulated net financial assets: Point estimates with $k=1$ (gray dot) and optimal $k^*$ (black dot), and 95\% confidence intervals. Both, $k=1$ and $k^*$ indicate 5\% significant positive ATET of approximately 19\%. 95\% confidence intervals using $k=1$ are 28\% wider compared to $k^*$.}
  \label{fig:emp}
\end{figure}

\begin{figure}[h]
\centering
  \includegraphics[scale=0.5]{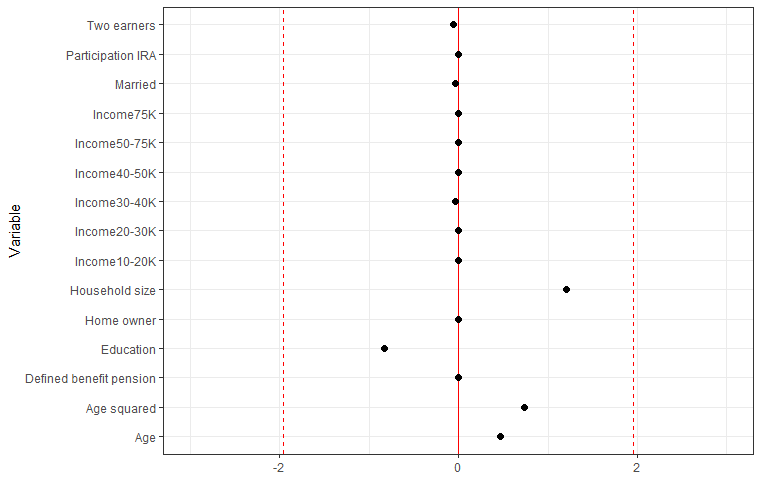}
\caption{Mean balancing condition for control variables used calculating ATET: Love plot using $k^*=19$. It seems that mean balancing conditions are satisfied as statistical tests (black dots) are inside critical levels (dashed red lines).}
  \label{fig:loveV}
\end{figure}

\section{Conclusions and final remarks}\label{Conclusions}
We present a simple approach to obtain an optimal selection of the number of control units ($k$) to build the ``synthetic'' or ``counterfactual'' unit for each individual in the k nearest neighbors algorithm. Our approach is based on a simple unbiased reference estimator (\textit{individual treatment effect on treated} for ATET and \textit{individual treatment effect} for ATE) such that we select $k$ minimizing the mean squared error with respect to the reference estimator. Therefore, the optimal $k=k^*$ achieves a balance between confidence interval length and type I error. 

Our application suggests that $k^*$ can be relatively large compared to the common practice of using $k=1$. This implies significant reductions regarding confidence interval lengths. Although, both choices show 5\% statistically significant effects of 401(k) enrollment in the probability of positive accumulated net financial assets, and its amount.  

A final remark is that using as reference estimator for treatments effects the one obtained using $k=1$ is an interesting alternative. We obtained same qualitative results, that is, a balance between interval length and type I error, and a considerable reduction in MRSE. Simulations exercises confirm this.

\clearpage
\bibliography{myReferences} 
\bibliographystyle{apalike} 

\clearpage
\section{Appendix}\label{Appendx}
\subsection{Optimal selection of $k$ in kNN: Average treatment effects}\label{ref:ATE}

The average treatment effect is

\begin{equation*}
\Delta^{ATE}=\mathbb{E}\left[(Y_i(1)-Y_i(0))|\bm{X}_i=\bm{x}_i\right].  
\end{equation*}

A matching estimator for the ATE is given by

\begin{align*}
    \hat{\Delta}^{ATE}=\frac{1}{N}\sum_{i=1}^N \left[\hat{Y}_i(1)-\hat{Y}_i(0)\right],
\end{align*}
where
\begin{align*}
    \hat{Y}_i(1)&=\begin{Bmatrix} \frac{1}{k}\sum_{j\in A_k^i(\bm x)} Y_j(0), & D_i=0\\
    Y_i, & D_i=1\end{Bmatrix},
\end{align*}

\begin{align*}
    \hat{Y}_i(0)&=\begin{Bmatrix} Y_i, & D_i=0\\
    \frac{1}{k}\sum_{j\in A_k^i(\bm x)} Y_j(1), & D_i=1\end{Bmatrix},
\end{align*}

$A_k^i(\bm{x})$ defines the relevant control group conditional on being treated or untreated.

The bias of the ATE is 

\begin{equation*}
    B^{ATE}=\frac{1}{N}\left\{\sum_{i\in\left\{D_i=1\right\}}\left[\mu_0(\bm{x}_i)-\frac{1}{k}\sum_{j\in A_k^i(\bm x)}\mu_0(\bm{x}_j)\right] - \sum_{i\in\left\{D_i=0\right\}}\left[\mu_1(\bm{x}_i)-\frac{1}{k}\sum_{j\in A_k^i(\bm x)}\mu_1(\bm{x}_j)\right]\right\},
\end{equation*}

\noindent where $\mu_0(\bm{x}_i)=\mathbb{E}\left[Y|\bm{X}=\bm{x},D=0\right]$ and $\mu_1(\bm{x}_i)=\mathbb{E}\left[Y|\bm{X}=\bm{x},D=1\right]$. These can be estimated non-parametrically using a series expansion estimator to obtain $\hat{B}^{TM}$ \citep{Abadie2011}.

Therefore, the bias-corrected matching estimator is

\begin{equation*}
    \hat{\Delta}^{BCATE}=\hat{\Delta}^{ATE}-\hat{B}^{ATE}.
\end{equation*}

Theorem 4(i) in \cite{Abadie2006} shows asymptotic distribution convergence results for the bias corrected average treatment effect, and section 3.2 in their paper has required variance expressions (marginal and conditional).

The \textit{individual treatment effect} is 

\begin{equation}\label{eq3}
    {\Delta_i^{ATE*}}=\frac{Y_i(D_i-P(\bm{X}_i))}{P(\bm{X}_i)(1-P(\bm{X}_i))},
\end{equation}

such that $\mathbb{E}\left[{\Delta_i^{ATE*}}\big| \bm{X}_i=\bm{x}\right]=\Delta^{ATE}$ \citep{Athey2015}.

Therefore, the optimal selection of $k$ is given by solving the program

\begin{equation*}
    \argminA_{{k}}\frac{1}{G}\sum_{g=1}^{G}\left\{ \left(\hat{\Delta}_g^{BCATE}-\frac{1}{N_g}\sum_{i=1}^{N_g}\widehat{\Delta_{i,g}^{ATE*}}\right)^2\right\},
\end{equation*}

\noindent where 

\begin{equation*}
    \widehat{\Delta_{i,g}^{ATE*}}=\left(\frac{Y_{i,g}^{Test}(D_{i,g}^{Test}-\widehat{P(\bm{X}_{i,g}^{Test})})}{\widehat{P(\bm{X}_{i,g}^{Test})}(1-\widehat{P(\bm{X}_{i,g}^{Test})})})\right),
\end{equation*}

\begin{align*}
    \hat{\Delta}_g^{ATE}=&\frac{1}{N_{g}}\left\{\sum_{i\in\left\{D_{i,g}^{Test}=1\right\}}\left[Y_{i,g}^{Test}(1)-\frac{1}{k}\sum_{j\in A_k^i(\bm{x}_g^{Test})}Y_{j,g}^{Train}(0)\right]\right.\\
    &\left.-\sum_{i\in\left\{D_{i,g}^{Test}=0\right\}}\left[Y_{i,g}^{Test}(0)-\frac{1}{k}\sum_{j\in A_k^i(\bm{x}_g^{Test})}Y_{j,g}^{Train}(1)\right]\right\},
\end{align*}

\begin{align*}
    \hat{B}_g^{ATE}=&\frac{1}{N_{g}}\left\{\sum_{i\in\left\{D_{i,g}^{Test}=1\right\}}\left[\hat{\mu}_0(\bm{x}_i)_g^{Test}-\frac{1}{k}\sum_{j\in A_k^i(\bm{x}_g^{Test})}\hat{\mu}_0(\bm{x}_j)_{g}^{Train}\right]\right.\\
    &-\left.\sum_{i\in\left\{D_{i,g}^{Test}=0\right\}}\left[\hat{\mu}_1(\bm{x}_i)_g^{Test}-\frac{1}{k}\sum_{j\in A_k^i(\bm{x}_g^{Test})}\hat{\mu}_1(\bm{x}_j)_{g}^{Train}\right]\right\},
\end{align*}

\noindent and

\begin{equation*}
   \hat{\Delta}_g^{BCATE}=\hat{\Delta}_g^{ATE}-\hat{B}_g^{ATE}. 
\end{equation*}

An algorithm similar to Algorithm \ref{Alg1} can be implemented to obtain an optimal $k$ for the ATE given these definitions.

\subsection{Simulation outcomes: Average treatment effects}\label{simATE}

\begin{table}[h]
\caption{Mean relative squared errors continuous outcome: Average treatment effects}\label{tab:MRSE1ATE}
\centering
\begin{tabular}{lllllll}
\hline
$k$  & $m_{1}$ & $m_{2}$ & $m_{3}$ & $m_{4}$ & $m_{5}$ & $m_{6}$ \\ \hline
1  & 709.3    & 4,511.1  & 42.0     & 790.9    & 586.1    & 249.2    \\
2  & 12.8     & 31.1     & 97.7     & 403.6    & 200.8    & 63.2     \\
3  & 9.3      & 162.6    & 104.7    & 329.2    & 272.2    & 85.2     \\
4  & 20.4     & 805.9    & 26.0     & 17.5     & 54.9     & 49.1     \\
5  & 7.5      & 569.2    & 19.5     & 34.7     & 87.3     & 41.1     \\
6  & 11.4     & 382.9    & 25.5     & 39.5     & 539.8    & 27.3     \\
7  & 14.2     & 522.7    & 29.1     & 100.4    & 949.7    & 41.5     \\
8  & 11.8     & 855.1    & 31.7     & 171.5    & 1,055.5  & 46.0     \\
9  & 7.6      & 495.1    & 34.4     & 563.4    & 1,104.3  & 37.2     \\
10 & 9.8      & 697.9    & 40.9     & 856.8    & 946.1    & 55.8     \\
11 & 8.4      & 483.9    & 47.7     & 888.2    & 949.8    & 82.0     \\
12 & 12.8     & 372.7    & 52.9     & 1,050.7  & 1,073.7  & 71.3     \\
13 & 12.4     & 298.0    & 49.5     & 1,557.7  & 1,240.0  & 65.3     \\
14 & 10.7     & 360.9    & 56.7     & 1,759.5  & 1,445.1  & 49.8     \\
15 & 9.1      & 371.0    & 67.7     & 1,818.8  & 1,742.4  & 37.8     \\
16 & 14.7     & 429.1    & 68.6     & 1,756.0  & 1,915.3  & 38.3     \\
17 & 17.8     & 445.8    & 90.8     & 1,746.5  & 2,096.7  & 57.8     \\
18 & 10.9     & 382.9    & 104.9    & 1,782.8  & 2,252.2  & 84.7     \\
19 & 9.2      & 386.4    & 146.7    & 1,623.2  & 2,586.7  & 155.0    \\
20 & 9.4      & 233.2    & 184.2    & 1,739.2  & 2,722.4  & 212.9 \\ \hline
\multicolumn{7}{l}{\footnotesize $MRSE = \frac{1}{S_k}\sum_{s=1}^{S_k} \left(\frac{\hat{\Delta}^{BC}_s(k)-ATE}{\hat{\Delta}^{BC}_s(k^*)-ATE}\right)^2, \ k^*\neq k, k=1,2,\dots 20$}\\
\multicolumn{7}{l}{\footnotesize  where $\hat{\Delta}^{BC}_s(k)$ and $\hat{\Delta}^{BC}_s(k^*)$ are the ATE estimates using any $k$}\\
\multicolumn{7}{l}{\footnotesize  and the optimal $k=k^*$, and $S_k$ is the number of times that $k$ was}\\
\multicolumn{7}{l}{\footnotesize not optimal. It seems that there are relative large advantages}\\
\multicolumn{7}{l}{\footnotesize using our optimal $k^*$.}\\
\hline\hline
\end{tabular}
\end{table}

\begin{figure}
\centering
  \includegraphics[scale=0.5]{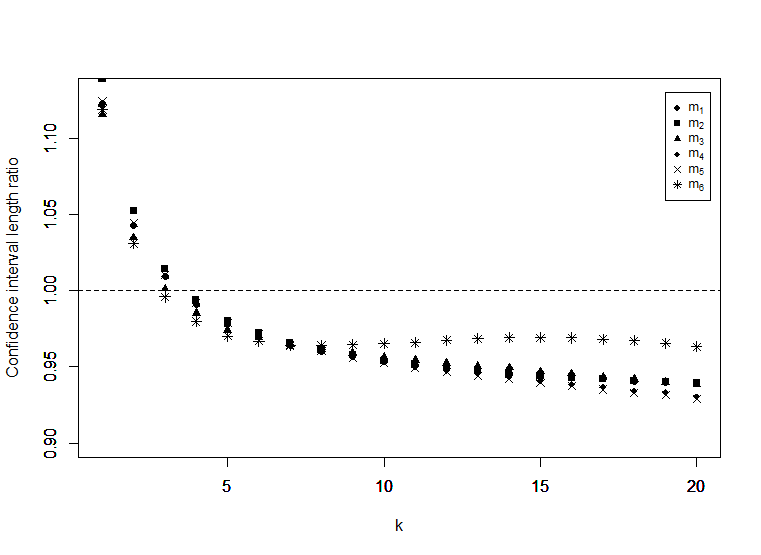}
\caption{95\% confidence interval length ratio continuous outcome: Average treatment effects. Small $k$ implies relative wide intervals, whereas large $k$ implies narrow intervals. The dotted line is the optimal selection of $k$ minimizing mean squared error. By construction has a ratio equal to 1.}
  \label{fig:IWRCATE}
\end{figure}

\begin{figure}
\centering
  \includegraphics[scale=0.5]{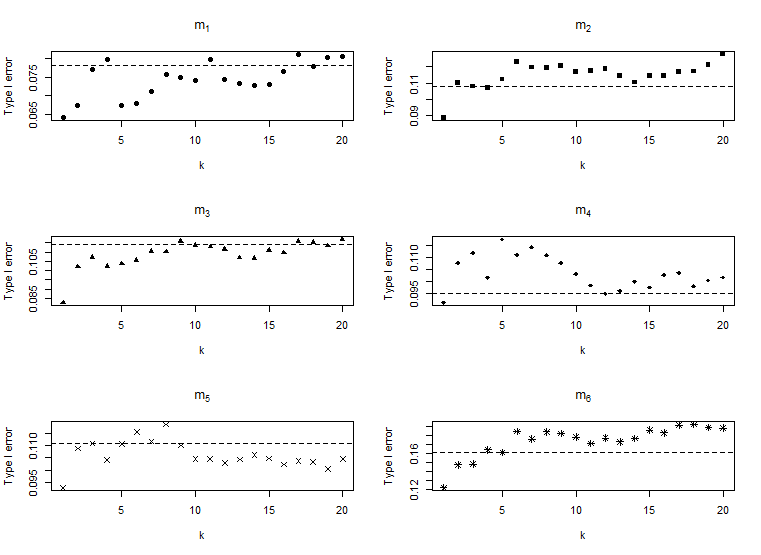}
\caption{Type I errors continuous outcome: Average treatment effects. The optimal $k$ gives an average type I error compared with other $k$ selection. $k=1$ always gets the best nominal size}
  \label{fig:IWRCATE}
\end{figure}

\begin{table}[h]
\caption{Mean relative squared errors binary outcome: Average treatment effects}\label{tab:MRSE2ATE}
\centering
\begin{tabular}{lllllll}
\hline
$k$  & $m_{1}$ & $m_{2}$ & $m_{3}$ & $m_{4}$ & $m_{5}$ & $m_{6}$ \\ \hline
1  & 73.5     & 123.3    & 191.2    & 275.5    & 194.6    & 1,444.2   \\
2  & 14.5     & 43.5     & 838.5    & 209.1    & 25.3     & 187,820.3 \\
3  & 12.7     & 41.6     & 906.2    & 266.3    & 42.3     & 378,910.0 \\
4  & 35.3     & 37.7     & 27.5     & 433.5    & 106.5    & 508,679.5 \\
5  & 15.7     & 28.7     & 26.1     & 308.9    & 149.3    & 519,137.8 \\
6  & 12.5     & 30.2     & 22.5     & 511.0    & 150.7    & 626,520.8 \\
7  & 12.8     & 49.5     & 38.6     & 754.6    & 148.2    & 739,373.9 \\
8  & 9.4      & 81.3     & 40.0     & 1,114.0  & 182.0    & 826,213.6 \\
9  & 19.0     & 59.8     & 31.3     & 1,048.2  & 133.6    & 755,273.2 \\
10 & 12.7     & 79.5     & 23.2     & 1,199.5  & 134.1    & 764,009.2 \\
11 & 13.9     & 44.8     & 26.8     & 988.9    & 140.7    & 786,963.7 \\
12 & 16.5     & 44.5     & 30.9     & 1,140.5  & 131.7    & 715,959.8 \\
13 & 24.7     & 33.5     & 18.0     & 1,375.7  & 139.4    & 727,003.1 \\
14 & 18.5     & 25.1     & 25.9     & 1,362.9  & 157.4    & 697,692.1 \\
15 & 23.3     & 20.4     & 48.2     & 1,213.5  & 194.6    & 711,163.3 \\
16 & 26.3     & 15.1     & 13.0     & 1,198.5  & 205.3    & 738,156.2 \\
17 & 23.8     & 15.0     & 14.2     & 1,269.1  & 196.7    & 695,079.4 \\
18 & 17.9     & 14.6     & 11.9     & 1,325.0  & 220.8    & 661,129.6 \\
19 & 18.2     & 14.7     & 12.0     & 1,267.1  & 235.7    & 615,817.0 \\
20 & 19.4     & 16.2     & 24.4     & 1,381.8  & 254.8    & 647,296.8 \\ \hline
\multicolumn{7}{l}{\footnotesize $MRSE = \frac{1}{S_k}\sum_{s=1}^{S_k} \left(\frac{\hat{\Delta}^{BC}_s(k)-ATE}{\hat{\Delta}^{BC}_s(k^*)-ATE}\right)^2, \ k^*\neq k, k=1,2,\dots 20$}\\
\multicolumn{7}{l}{\footnotesize  where $\hat{\Delta}^{BC}_s(k)$ and $\hat{\Delta}^{BC}_s(k^*)$ are the ATE estimates using}\\
\multicolumn{7}{l}{\footnotesize  any $k$ and the optimal $k=k^*$, and $S_k$ is the number of times}\\
\multicolumn{7}{l}{\footnotesize that $k$ was not optimal. It seems that there are relative}\\
\multicolumn{7}{l}{\footnotesize large advantages using an optimal $k$.}\\
\hline\hline
\end{tabular}
\end{table}

\begin{figure}
\centering
  \includegraphics[scale=0.5]{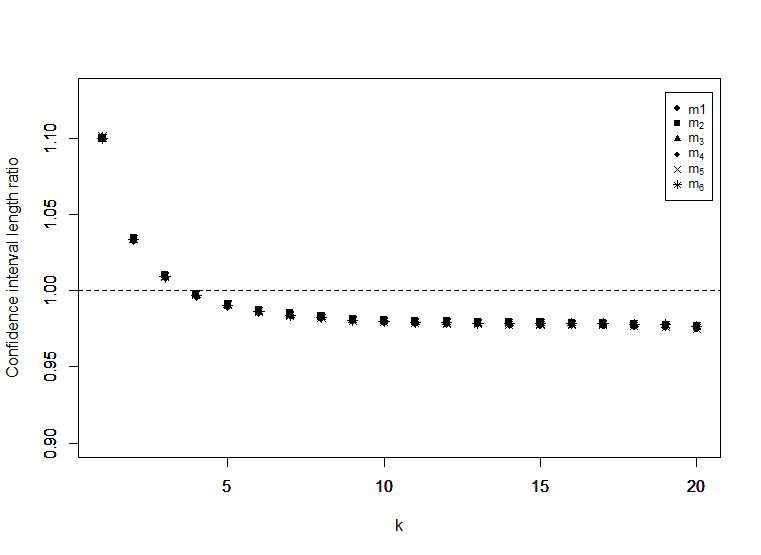}
\caption{95\% confidence interval length ratio binary outcome: Average treatment effects. Small $k$ implies relative wide intervals, whereas large $k$ implies narrow intervals. The dotted line is optimal selection of $k$ minimizing mean squared error. By construction has a ratio equal to 1.}
  \label{fig:IWRCTbinATE}
\end{figure}

\begin{figure}
\centering
  \includegraphics[scale=0.5]{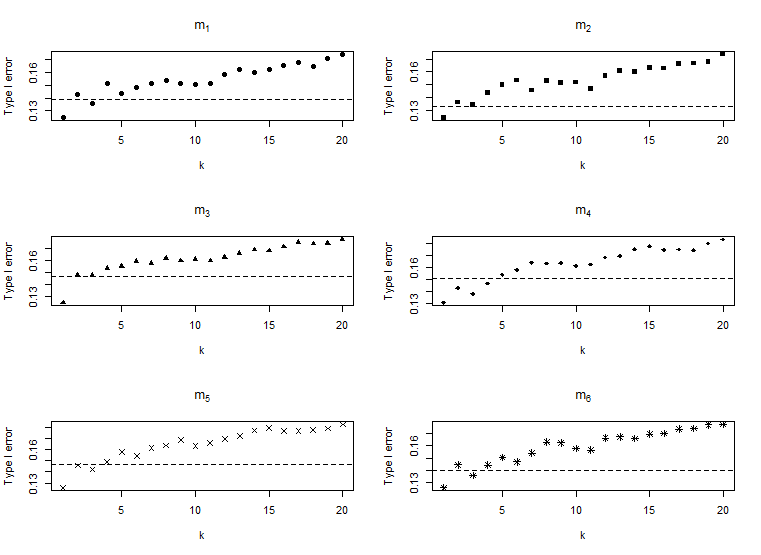}
\caption{Type I errors (5\% nominal size) binary outcome: Average treatment effects. The optimal $k$ gives a relatively small type I error compared with other $k$ selection. $k=1$ always gets the best nominal size.}
  \label{fig:TypeICTbinATE}
\end{figure}

\end{document}